\newcommand{\commentt}[2]{#1}
\newcommand{\comment}[1]{}
\newcommand{\cov}{{\rm Cov}}
\newcommand{\var}{{\rm Var}}
\newcommand{\citet}{\citeasnoun}
\title{On the effective dimension and multilevel Monte Carlo}
\author{Nabil Kahal\'e
\thanks{\emph{ESCP Business School,  75011 Paris,
France; {e-mail: }{nkahale@escp.eu}.}}
}
\date{\today}
\begin{document}

\newtheorem{example}{Example}[section]
\newtheorem{theorem}{Theorem}[section]
\newtheorem{conjecture}{Conjecture}[section]
\newtheorem{lemma}{Lemma}[section]
\newtheorem{proposition}{Proposition}[section]
\newtheorem{remark}{Remark}[section]
\newtheorem{corollary}{Corollary}[section]
\newtheorem{definition}{Definition}[section]
\maketitle
\newcommand{\ABSTRACT}[1]{\begin{abstract}#1\end{abstract}}
\newcommand{\citep}{\cite}
}
{
\documentclass[a4paper,11pt]{elsarticle}
\usepackage{algorithm}
\usepackage{algpseudocode}
\usepackage{hyperref}
\journal{Operations Research Letters}
\usepackage[english]{babel}
\usepackage{amssymb}
\usepackage{amsmath}
\usepackage{setspace}
\usepackage{appendix}
\usepackage{color}
\onehalfspacing
\bibliographystyle{model5-names}\biboptions{authoryear}
\begin{document}
\newtheorem{example}{Example}
\newtheorem{theorem}{Theorem}
\newtheorem{lemma}{Lemma}
\newtheorem{proposition}{Proposition}
\newtheorem{remark}{Remark}
\newtheorem{corollary}{Corollary}
\newtheorem{definition}{Definition}
\newproof{proof}{Proof}

\begin{frontmatter}
\title{On the effective dimension and multilevel Monte Carlo}
\author{Nabil Kahal\'e}
\address{ESCP Business School, 75011 Paris,
France; {e-mail: }{nkahale@escp.eu}}
}
\begin{abstract}I consider the problem of integrating  a function \(f\) over the \(d\)-dimensional unit cube. I describe a multilevel Monte Carlo method that estimates the integral with variance at most \(\epsilon^{2}\) in \(O(d+\ln(d)d_{t}\epsilon^{-2})\) time, for \(\epsilon>0\), where \(d_{t}\) is the truncation dimension of \(f\). In contrast, the standard Monte Carlo method typically achieves such variance in  \(O(d\epsilon^{-2})\) time. A lower bound of order  \(d+d_{t}\epsilon^{-2}\) is described for a class of multilevel Monte Carlo methods.
\end{abstract}
\commentt{
Keywords:
multilevel Monte Carlo,  Quasi-Monte Carlo, variance reduction, effective dimension, truncation dimension, time-varying Markov chains 
 }
{
 \begin{keyword}
 multilevel Monte Carlo \sep  Quasi-Monte Carlo \sep variance reduction \sep effective dimension \sep truncation dimension \sep time-varying Markov chains
\end{keyword}

\end{frontmatter}
}
\section{Introduction}
Monte Carlo simulation is used in a variety of areas including finance,  queuing systems, machine learning, and health-care.  A drawback of Monte Carlo simulation is its high computation cost. This motivates the need to design efficient simulation tools that optimize the tradeoff between the running time and the statistical error. This need is even stronger for high-dimensional problems, where the time to simulate a single run is typically proportional to the dimension. Variance reduction techniques that improve the efficiency of  Monte Carlo simulation have been developed in the previous literature  (e.g.~\cite{glasserman2004Monte,asmussenGlynn2007}). 

This paper studies the  estimation of \(\int_{[0,1]^{d}}f(x)\,dx\), where  \(f\)  is a real-valued square-integrable function on \([0,1]^{d}\). Note that \(\int_{[0,1]^{d}}f(x)\,dx=E(f(U))\),    where      \(U=(U_{1},\ldots,U_{d})\) and 
    \(U_{1},\ldots,U_{d}\)
  are independent  random variables uniformly distributed on \([0,1]\).
 The standard Monte Carlo method estimates \(E(f(U))\) by taking the average of \(f\) over \(n\) random points uniformly distributed over \([0,1]^{d}\), and achieves a statistical error of order \(n^{-1/2}\). The Quasi-Monte Carlo method (QMC) estimates 
\(E(f(U))\)  by taking the average of \(f\) over a predetermined sequence of points in \([0,1]^{d}\), and achieves   an error of order \((\log n)^{d}/n\) for certain sequences when   \(f\) has finite Hardy-Krause variation \citep[Ch.~5]{glasserman2004Monte}. Thus, for small values of \(d\), QMC can substantially outperform standard Monte Carlo.  Moreover, numerical experiments show that   QMC performs well  in certain  high-dimensional problems where  the importance of \(U_{i}\) decreases with \(i\) \citep[Ch.~5]{glasserman2004Monte}.
  \citet{CMO1997} use the ANOVA decomposition, a representation of \(f\) as the sum of orthogonal components, to define the effective dimension in the truncation sense: the truncation  dimension is low when the first variables are important. \citet{SW1998}   prove that QMC is  effective for a class of functions where high dimensions have decaying importance. The connection between QMC  and  various notions of effective dimension is studied in    \citep{EcuyerLemieux2000,owen2003,LiuOwen2006,wasilkowski2021quasi}. Methods that reduce the effective dimension and improve the performance of QMC are described in~\citep{wangSloan2011,wangTan2013,xiao2019enhancing}. \citet{owen2019effective} gives a recent survey on the effective dimension. \citet{kahaRandomizedDimensionReduction20} studies the relationship between the truncation dimension and the randomized dimension reduction method, a recent variance reduction technique applicable to high-dimensional problems.

A major advance in Monte Carlo simulation is the multilevel Monte Carlo method  
(MLMC), a variance reduction technique introduced by~\citet{Giles2008}. The MLMC method  significantly reduces the time to estimate functionals of a stochastic differential equation, and has  many other applications (e.g. \citep{Staum2017,Nobile2017,Goda2020,kahale2020Asian,Blanchet2021}).
This paper examines the connection between the MLMC method and the truncation dimension. Section~\ref{se:MLMC}
describes a MLMC method that, under suitable conditions, estimates \(E(f(U))\) with variance at most \(\epsilon^{2}\) in \(O(d+\ln(d)d_{t}\epsilon^{-2})\) time, for \(\epsilon>0\), where \(d_{t}\) is the truncation dimension of \(f\). In contrast, the standard Monte Carlo method typically achieves variance at most \(\epsilon^{2}\) in \(O(d\epsilon^{-2})\) time. My approach is based on fixing unessential variables and on approximating \(f(U)\) by functions of the first components of \(U\). Fixing unessential variables is   analysed by \citet{sobol2001global}  in the context of  the ANOVA decomposition.
Section~\ref{se:lowerBound} considers a class of MLMC estimators that approximate   \(f(U)\) by functions of the first components of \(U\).
Under general conditions, it gives a lower bound of order  \(d+d_{t}\epsilon^{-2}\) on the time required by these estimators to evaluate \(E(f(U))\) with variance at most \(\epsilon^{2}\). Section~\ref{se:examples} studies  MLMC and the truncation dimension for time-varying Markov chains with \(d\) time-steps.
Under suitable conditions, it is shown that certain  Markov chain functionals can be estimated  with variance at most \(\epsilon^{2}\) in \(O(d+\epsilon^{-2})\) time, and that the truncation dimension associated with these functionals is upper bounded by a constant independent of \(d\).  Randomized MLMC methods for equilibrium expectations of time-homogeneous Markov chains are studied in~\cite{glynn2014exact}.
\section{Preliminaries}
\subsection{The ANOVA decomposition}
It is assumed throughout the paper that \(f\) is square-integrable with    \(\var(f(U))>0\). A representation of   \(f\) in the following form:\begin{equation}\label{eq:ANOVA}
f=\sum_{Y\subseteq\{1,\dots,d\}}f_{Y},
\end{equation}  is called ANOVA decomposition if, for \(Y\subseteq\{1,\dots,d\}\) and \(u=(u_{1},\dots,u_{d})\in[0,1]^{d}\),\begin{enumerate}
\item 
 \(f_Y\) is a measurable function on \([0,1]^{d}\) and     \(f_{Y}(u)\) depends on \(u\) only through  \((u_{j})_{j\in Y}\).
 \item For    \(j\in Y\),
\begin{equation*}
\int^{1}_{0} f_{Y}(u_{1},\dots,u_{j-1},x,u_{j+1},\dots,u_{d})\,dx=0.
\end{equation*}
\end{enumerate}It can be shown~\citep[p. 272]{sobol2001global} that there is a unique ANOVA representation of \(f\), that  \(f_{\emptyset}=E(f(U))\), and  that the \(f_{Y}\)'s are square-integrable. Furthermore,  if \(Y\neq Y'\),
\begin{equation}\label{eq:ortho}\cov(f_{Y}(U),f_{Y'}(U))=0,\end{equation}and   \begin{equation}\label{eq:varf}
\var(f(U))=\sum_{Y\subseteq\{1,\dots,d\}}\sigma_{Y}^{2},
\end{equation}
where \(\sigma_{Y}\) is the standard deviation of \(f_{Y}(U)\). For \(0\leq i\leq d\), \begin{equation}\label{eq:condExpectationSobol}
E(f(U)|U_{1},\dots,U_{i})=\sum_{Y\subseteq\{1,\dots,i\}}f_{Y}(U).
\end{equation}   
\citet{owen2003} defines the truncation dimension \(d_{t}\) of \(f\)  as\begin{equation*}
d_{t}:= \frac{\sum_{Y\subseteq\{1,\dots,d\},Y\neq\varnothing}\max(Y)\sigma_{Y}^{2}}{\var(f(U))}.
\end{equation*}
For \(0\leq i\leq d\), let \begin{displaymath}
D(i):=\sum_{Y\subseteq\{1,\dots,d\},Y\neq\varnothing,\max(Y)>i}{\sigma_{Y}}^{2}
\end{displaymath} be the total variance corresponding to the last \(d-i\) components of \(f\) (see \cite{sobol2001global}).
The sequence \((D(i):0\leq i\leq d)\) is  decreasing, with   \(D(0)=\var(f(U))\) by \eqref{eq:varf} and \(D(d)=0\). Proposition~\ref{pr:varDifff} gives a bound on the variance of \(f(V)-f(V')\), when \(V\) and  \(V'\)  are  uniformly distributed on \([0,1]^{d}\) and have the same first \(i\) components. It is related to \citep[Theorem 3]{sobol2001global}. 

\begin{proposition}
\label{pr:varDifff}
Let \(i\in\{0,\dots,d\}\). Assume that \(V\) and \(V'\) are uniformly distributed on  \([0,1]^{d}\), and that \(V_{j}=V'_{j}\) for \(1\leq j\leq i\). Then \(\var(f(V)-f(V'))\le4D(i)\).
\end{proposition}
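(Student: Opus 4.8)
The plan is to subtract from \(f\) its conditional expectation given the first \(i\) coordinates — a quantity that \(V\) and \(V'\) share — and then to control the remaining ``tail'' term using only the fact that \(V\) and \(V'\) are individually uniform.

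First I would set \(g:=E(f(U)\mid U_{1},\dots,U_{i})\), so that by \eqref{eq:condExpectationSobol} we have \(g=\sum_{Y\subseteq\{1,\dots,i\}}f_{Y}\); in particular \(g(u)\) depends on \(u\) only through \(u_{1},\dots,u_{i}\). Since \(V_{j}=V'_{j}\) for \(1\le j\le i\), this gives \(g(V)=g(V')\), and hence
\[
f(V)-f(V')=h(V)-h(V'),\qquad h:=f-g=\sum_{Y\subseteq\{1,\dots,d\},\ \max(Y)>i}f_{Y},
\]
where the last sum runs over the nonempty \(Y\) not contained in \(\{1,\dots,i\}\), i.e.\ those with \(\max(Y)>i\).

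Next I would invoke the elementary inequality \(\var(A-B)\le 2\var(A)+2\var(B)\), valid for any square-integrable \(A,B\), applied with \(A=h(V)\) and \(B=h(V')\). Because \(V\) and \(V'\) are each uniformly distributed on \([0,1]^{d}\), the variables \(h(V)\) and \(h(V')\) have the same law as \(h(U)\); moreover \(E(h(U))=0\) since every \(f_{Y}\) with \(Y\neq\varnothing\) has zero mean, and the orthogonality relation \eqref{eq:ortho} together with \eqref{eq:varf} yields
\[
\var(h(V))=\var(h(V'))=\var(h(U))=\sum_{Y\subseteq\{1,\dots,d\},\ \max(Y)>i}\sigma_{Y}^{2}=D(i).
\]
Combining these facts gives \(\var(f(V)-f(V'))=\var(h(V)-h(V'))\le 4D(i)\), as claimed.

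I do not expect a genuine obstacle here: the only insight needed is that the low‑order part \(g\) cancels, after which the joint law of \((V,V')\) — beyond the coincidence of the first \(i\) coordinates and the uniformity of the marginals — plays no role, since only the marginal variances of \(h(V)\) and \(h(V')\) enter. A sharper constant could be obtained by instead bounding \(\cov(h(V),h(V'))\) directly, but the factor \(4\) is all that is needed in what follows.
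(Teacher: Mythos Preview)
Your proof is correct and follows essentially the same route as the paper: cancel the low-index ANOVA components (those with \(Y\subseteq\{1,\dots,i\}\)) using \(V_{j}=V'_{j}\) for \(j\le i\), then bound the variance of the remaining difference by \(2\var(h(V))+2\var(h(V'))=4D(i)\) via orthogonality and the marginal uniformity of \(V\) and \(V'\). The only cosmetic difference is that you package the cancelled part as the conditional expectation \(g\), whereas the paper works directly with the sums \(\sum_{\max(Y)>i}f_{Y}(V)\) and \(\sum_{\max(Y)>i}f_{Y}(V')\).
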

\begin{proof}As \(f_{Y}(V)=f_{Y}(V')\) for \(Y\subseteq\{1,\dots,i\}\), we have\begin{displaymath}
f(V)-f(V')=\sum_{Y\subseteq\{1,\dots,d\},Y\neq\varnothing,\max(Y)>i}f_{Y}(V)-f_{Y}(V').
\end{displaymath}
By~\eqref{eq:ortho},
\begin{displaymath}
\var\left(\sum_{Y\subseteq\{1,\dots,d\},Y\neq\varnothing,\max(Y)>i}f_{Y}(V)\right)=D(i),
\end{displaymath}
and a similar relation holds for \(V'\). Since \(\var(Z+Z')\le2(\var(Z)+\var(Z'))\) for square-integrable random variables \(Z\) and \(Z'\), this achieves the proof.
\commentt{}{\qed}\end{proof} 
Proposition~\ref{pr:LowerBoundVarAnova} gives a lower bound on the variance of the difference between \(f(U)\) and a function of the first \(i\) components of \(U\). A similar result is shown in   \citep[Theorem 1]{sobol2001global}.   
\begin{proposition}\label{pr:LowerBoundVarAnova}Let \(g\) be a real-valued square-integrable function on \([0,1]^{i}\), where \(0\leq i\leq d\).
Then \begin{equation*}
D(i)\le\var(f(U)-g(U_{1},\dots,U_{i})).
\end{equation*}\end{proposition}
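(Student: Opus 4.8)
The plan is to compare $g$ with the conditional expectation $E(f(U)|U_{1},\dots,U_{i})$, which is the $L^{2}$-optimal approximation of $f(U)$ by functions of the first $i$ coordinates. I would write $f(U)-g(U_{1},\dots,U_{i})=A+B$, where $A:=f(U)-E(f(U)|U_{1},\dots,U_{i})$ and $B:=E(f(U)|U_{1},\dots,U_{i})-g(U_{1},\dots,U_{i})$. The first observation is that $A$ has conditional expectation zero given $U_{1},\dots,U_{i}$, whereas $B$ is a function of $U_{1},\dots,U_{i}$; by the tower property this yields $\cov(A,B)=E(AB)=E(B\,E(A|U_{1},\dots,U_{i}))=0$. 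Hence $\var(f(U)-g(U_{1},\dots,U_{i}))=\var(A)+\var(B)\ge\var(A)$, so it suffices to show $\var(A)=D(i)$.

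To identify $\var(A)$, I would use \eqref{eq:condExpectationSobol}: since $f(U)=\sum_{Y\subseteq\{1,\dots,d\}}f_{Y}(U)$ and $E(f(U)|U_{1},\dots,U_{i})=\sum_{Y\subseteq\{1,\dots,i\}}f_{Y}(U)$, we get $A=\sum_{Y\neq\varnothing,\ \max(Y)>i}f_{Y}(U)$. Each such $f_{Y}$ has mean zero, and by the orthogonality relation \eqref{eq:ortho} the variance of this sum equals $\sum_{Y\neq\varnothing,\ \max(Y)>i}\sigma_{Y}^{2}$, which is exactly $D(i)$ by definition. Combining this with the previous paragraph gives $D(i)\le\var(f(U)-g(U_{1},\dots,U_{i}))$.

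I do not anticipate a serious obstacle: the argument is in essence a Pythagoras/projection computation in $L^{2}$. The only point needing a little care is the vanishing of $\cov(A,B)$, which requires $B$ to be both $\sigma(U_{1},\dots,U_{i})$-measurable and square-integrable so that $AB$ is integrable and the conditioning step is valid; square-integrability of $g$ is assumed, and that of $E(f(U)|U_{1},\dots,U_{i})$ follows from Jensen's inequality and square-integrability of $f$, so the step goes through. One could alternatively avoid conditional expectations entirely and argue directly from the ANOVA decomposition, expanding $f(U)-g(U_{1},\dots,U_{i})$, grouping the components $f_{Y}$ with $\max(Y)\le i$ together with $-g$, and invoking \eqref{eq:ortho}; I would present whichever version is shorter.
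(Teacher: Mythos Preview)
Your proposal is correct and follows essentially the same route as the paper: define $\eta:=f(U)-E(f(U)\mid U_{1},\dots,U_{i})$ (your $A$), use the $L^{2}$-minimising property of conditional expectation to get $\var(\eta)\le\var(f(U)-g(U_{1},\dots,U_{i}))$, and then identify $\var(\eta)=D(i)$ via \eqref{eq:condExpectationSobol} and \eqref{eq:ortho}. The only difference is cosmetic: the paper invokes ``properties of the conditional expectation'' for the inequality, whereas you spell out the Pythagorean step $\cov(A,B)=0$ explicitly.
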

\begin{proof} Define the random-variable 
\begin{equation*}\eta=f(U)-E(f(U)|U_{1},\dots,U_{i}).\end{equation*}
By properties of the conditional expectation,     \begin{equation*}
\var(\eta)\le\var(f(U)-g(U_{1},\dots,U_{i})).
\end{equation*} Combining \eqref{eq:ANOVA} and~\eqref{eq:condExpectationSobol} shows that \begin{equation*}
\eta=\sum_{Y\subseteq\{1,\dots,d\},Y\not\subseteq\{1,\dots,i\}}{f_{Y}}(U).
\end{equation*}By \eqref{eq:ortho}, \(\var(\eta)=D(i)\).
This completes the proof.
\commentt{}{\qed}\end{proof}
Proposition~\ref{pr:TruncationDimensionCharacterisation} provides an alternative characterisation of the truncation dimension.
\begin{proposition}\label{pr:TruncationDimensionCharacterisation}
\begin{displaymath}
\sum_{i=0}^{d}D(i)=d_{t}\var(f(U)).
\end{displaymath}
\end{proposition}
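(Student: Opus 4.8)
The plan is to expand both sides in terms of the ANOVA variances \(\sigma_{Y}^{2}\) and interchange the order of summation; no deep idea is required. First I would substitute the definition of \(D(i)\) into the left-hand side to obtain
\begin{equation*}
\sum_{i=0}^{d}D(i)=\sum_{i=0}^{d}\ \sum_{Y\subseteq\{1,\dots,d\},Y\neq\varnothing,\max(Y)>i}\sigma_{Y}^{2}.
\end{equation*}
Then I would swap the two summations. For a fixed nonempty \(Y\subseteq\{1,\dots,d\}\), the quantity \(\sigma_{Y}^{2}\) is counted once for each \(i\in\{0,1,\dots,d\}\) with \(i<\max(Y)\); since \(1\le\max(Y)\le d\), these are exactly the values \(i=0,1,\dots,\max(Y)-1\), of which there are \(\max(Y)\). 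Hence
\begin{equation*}
\sum_{i=0}^{d}D(i)=\sum_{Y\subseteq\{1,\dots,d\},Y\neq\varnothing}\max(Y)\,\sigma_{Y}^{2},
\end{equation*}
and comparing with the definition of \(d_{t}\) (and using \(\var(f(U))>0\)) yields the stated identity.

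The only point requiring care is the bookkeeping of the index ranges: the sum over \(i\) must start at \(i=0\) for the count of admissible \(i\) to equal \(\max(Y)\) rather than \(\max(Y)-1\), and one should note that every nonempty \(Y\) satisfies \(\max(Y)\in\{1,\dots,d\}\), so no terms are lost at the endpoints. An equivalent route would be to observe that \(D(i-1)-D(i)=\sum_{Y\subseteq\{1,\dots,d\},\max(Y)=i}\sigma_{Y}^{2}\) for \(1\le i\le d\) and then apply Abel summation against the weights \(\max(Y)\), but the direct Fubini argument above is shorter. I do not anticipate any genuine obstacle in this proof.
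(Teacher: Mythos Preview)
Your proof is correct and is essentially identical to the paper's: both substitute the definition of \(D(i)\), use the indicator \(\mathbf{1}\{i<\max(Y)\}\), swap the order of summation, and count \(\max(Y)\) admissible values of \(i\). Your remarks about the index bookkeeping and the alternative Abel-summation route are accurate but add nothing beyond what the paper does implicitly.
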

\begin{proof}
\begin{eqnarray*}
\sum_{i=0}^{d}D(i)
&=&\sum_{i=0}^{d}\sum_{Y\subseteq\{1,\dots,d\},Y\neq\emptyset}{\bf1}\{i<\max(Y)\}{\sigma_{Y}}^{2}\\
&=&\sum_{Y\subseteq\{1,\dots,d\},Y\neq\emptyset}\max(Y){\sigma_{Y}}^{2}\\
&=&d_{t}\var(f(U)).
\end{eqnarray*}
\commentt{}{\qed}\end{proof}
\subsection{Work-normalized variance}\label{sub:BackgroundOnWorkNormalizedVariance}

Let \(\mu\) be a real number and let \(\psi\) be a square-integrable random variable with positive variance and expected running time \(\tau\).  Assume that \(\psi\) is an unbiased estimator of \(\mu\), i.e., \(E(\psi)=\mu\). The work-normalized variance \(\tau\var(\psi)\) is a standard measure of the performance of  \(\psi\) ~\cite{glynn1992asymptotic}: asymptotically efficient unbiased estimators have low work-normalized variance.
 For \(\epsilon>0\),  let \(n_{\epsilon}\) be the smallest integer such that  the variance of the average of  \(n_{\epsilon}\)  independent copies of \(\psi\) is at most \(\epsilon^{2}\). Thus, \(n_{\epsilon}=\lceil\var(\psi)\epsilon^{-2}\rceil\). As \((x+1)/2\leq\lceil x\rceil\leq x+1\) for \(x>0\),    \begin{equation}\label{eq:TimeVarianceGen}
\frac{\tau+\tau\var(\psi)\epsilon^{-2}}{2}\le T(\psi,\epsilon)\le \tau+\tau\var(\psi)\epsilon^{-2},
\end{equation}  
where  \(T(\psi,\epsilon):=n_{\epsilon}\tau\) is the total expected time required to estimate \(\mu\) with variance at most \(\epsilon^{2}\) by taking the average of independent runs of \(\psi\).
\subsection{Reminder on MLMC}\label{sub:BackgroundOnMLMC}
Let \(\phi\) be a square-integrable random variable that is approximated with increasing accuracy by square-integrable random variables \(\phi_{l}\), \(0\leq l\leq L\), where   \(L\) is a positive integer, with \(\phi_{L}=\phi\) and \(\phi_{0}=0\). For   \(1\leq l\leq L\),   let  \(\hat\phi_{l}\) be the average of \(n_{l}\) independent copies of \(\phi_{l}-\phi_{l-1}\), where \(n_{l}\) is an arbitrary positive integer.  Suppose that     \(\hat\phi_{1},\dots,\hat\phi_{L}\) are independent. Since \begin{equation*}
E(\phi)=\sum^{L}_{l=1}E(\phi_{l}-\phi_{l-1}),
\end{equation*}     \(\hat\phi:=\sum^{L}_{l=1}\hat\phi_{l}
\) is an unbiased estimator of \(E(\phi)\), i.e.,\begin{equation}\label{eq:unbiasedMLMCGen}
E(\hat\phi)=E(\phi).
\end{equation}As observed in~\citep{Giles2008}, 
\begin{equation}\label{eq:varHatPhiGen}
\var(\hat\phi)=\sum^{L}_{l=1}\frac{V_{l}}{n_{l}},
\end{equation}where \(V_{l}:=\var(\phi_{l}-\phi_{l-1})\) for \(1\leq l\leq L\).   The expected  time required to simulate \(\hat\phi\) is \begin{equation}
\label{eq:hatTDef}
\hat T:=\sum^{L}_{l=1}n_{l}\hat t_{l},
\end{equation}where    \(\hat t_{l}\)  is the expected time to simulate \(\phi_{l}-\phi_{l-1}\).   
 The analysis in~\citep{Giles2008} shows that  
\begin{equation}\label{eq:MLMCTimeVariance}
\bigg(\sum^{L}_{l=1}\sqrt{V_{l}\hat t_{l}}\bigg)^{2}\le\hat T\var(\hat\phi),
\end{equation}  
with equality when the \(n_{l}\)'s are proportional to \(\sqrt{V_{l}/\hat t_{l}}\) (ignoring integrality constraints).
 
\section{The    MLMC algorithm}\label{se:MLMC}
Let \(L=\lceil \log_{2}(d)\rceil\)  and, for  \(0\leq l\leq L-1\),   let \(m_{l}=2^{l}-1\), with \(m_{L}=d\). For \(1\le l\leq L\) and  \(u,u'\in[0,1]^{d}\),  let\begin{equation*}
h_{l}(u,u'):=f(u_{1},\dots,u_{m_{l}},u'_{m_{l}+1},\dots,u'_{d}),
\end{equation*}   with   \(h_{0}(u,u'):=0\). Note that  \(h_{L}(u,u')=f(u)\). Let \(U'\) be a copy of \(U\) and, for \(1\leq l\leq L\), let \((U^{l,j},1\leq j\le n_{l})\) be \(n_{l}\)  copies of \(U\), where \(n_{l}:=\lceil(d/L)2^{-l}\rceil\). Assume that the random variables \((U',U^{l,j},1\leq l\leq L, 1\leq j\le n_{l})\) are independent. For \(1\leq l\leq L\),  set\begin{equation}\label{eq:defhatphi}
\tilde\phi_{l}:=\frac{1}{n_{l}}\sum ^{n_{l}}_{j=1}(h_{l}(U^{l,j},U')-h_{l-1}(U^{l,j},U')),
\end{equation}     and let \(\tilde\phi:=\sum^{L}_{l=1}\tilde\phi_{l}
\). The estimator \(\tilde\phi\) does not fall, stricto sensu, in the category of  MLMC estimators described in Section~\ref{sub:BackgroundOnMLMC}. This is because the \(n_{l}\) summands in the right-hand side of \eqref{eq:defhatphi} are dependent random variables, in general. Note that $h_{l}(u,u')$ depends on \(u\) only through its first \(m_{l}\) components. Thus, once \(U'\) is simulated,   \(h_{l}(U^{l,j},U')\) and \(h_{l-1}(U^{l,j},U')\) can be calculated  by simulating only the  first \(m_{l}\) components of \(U^{l,j}\).  For \(1\leq l\leq L\), let \(\tilde t_{l}\) be the expected time to simulate the first \(m_{l}\) components of \(U\) and  calculate \(h_{l}(U,U')\), once \(U'\) is simulated and \(f(U')\) is calculated. In other words, \(\tilde t_{l}\) is the expected time to redraw  \(U_{1},\dots,U_{m_{l}}\) and
recalculate \(f(U)\), without modifying the last \(d-m_{l}\) components of \(U\). In particular, \(\tilde t_{L}\) is the expected time to simulate \(U\) and calculate \(f(U)\).
Let   \(\tilde T\)   be the expected  time to simulate \(\tilde\phi\). 

Theorem~\ref{th:mlmcAlgPerformance} below shows that  \(\tilde\phi\) is an unbiased estimator of \(E(f(U))\). Also, when \(\hat t_{l}\) is linear in \(m_{l}\), the work-normalized variance of   \(\tilde\phi\) satisfies the bound   \(\tilde T \var(\tilde\phi)=O(\ln(d)d_{t}\var(f(U)))\), that depends on \(d\) only through \(\ln(d)\). By~\eqref{eq:MLMCTotalExpected},     \(E(f(U))\) can be estimated  via  \(\tilde\phi\) with variance at most \(\epsilon^{2}\) in expected time that depends asymptotically (as \(\epsilon\) goes to \(0\)) on \(\ln(d)\).      In contrast, assuming the expected time to simulate \(f(U)\) is of order \(d\), the work-normalized variance of the standard Monte Carlo estimator is of order \(d\var(f(U))\) and, by~\eqref{eq:TimeVarianceGen},  the standard Monte Carlo algorithm achieves variance at most \(\epsilon^{2}\) in \(O(d+d\var(f(U))\epsilon^{-2})\) expected time.    
\begin{theorem}\label{th:mlmcAlgPerformance} We have \begin{equation}\label{eq:condEx}
E(\tilde\phi)=E(\tilde\phi|U')=E(f(U)),
\end{equation}\(\var(\tilde\phi)=E(\var(\tilde\phi)|U')\), and
\begin{equation}\label{eq:varianceHatPhi}
\var(\tilde\phi)\le16\frac{\lceil \log_{2}(d)\rceil}{d}d_{t}\var(f(U)).
\end{equation}
If, for some constant \(\tilde c\) and \(1\leq l\leq L\), \begin{equation}\label{eq:LinearTimeCondition}
\tilde t_{l}\leq \tilde cm_{l},
\end{equation}   then   \(\tilde T\le9\tilde cd\) and, for \(\epsilon>0\),    \begin{equation}\label{eq:MLMCTotalExpected}
T(\tilde\phi,\epsilon)=O(d+\ln(d)d_{t}\var(f(U))\epsilon^{-2}).
\end{equation}  
 
\end{theorem}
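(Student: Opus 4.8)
The plan is to prove every assertion by conditioning on $U'$. This is needed because $\tilde\phi$ is not a standard MLMC estimator: the $n_l$ summands of $\tilde\phi_l$ in \eqref{eq:defhatphi} share the common block $U'$ and are dependent; once $U'$ is fixed, however, the estimator decouples. For \eqref{eq:condEx}, write $\bar h_l(U'):=E(h_l(U,U')\mid U')$ for a generic copy $U$ of $U'$ independent of $U'$. Since, given $U'$, the $n_l$ summands of $\tilde\phi_l$ are i.i.d.\ , linearity of conditional expectation gives $E(\tilde\phi_l\mid U')=\bar h_l(U')-\bar h_{l-1}(U')$, and summing over $l$ telescopes to $\bar h_L(U')-\bar h_0(U')=E(f(U)\mid U')=E(f(U))$, the last step because $h_L(U,U')=f(U)$ and $U$ is independent of $U'$. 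Taking expectations over $U'$ gives $E(\tilde\phi)=E(f(U))$; and since $E(\tilde\phi\mid U')$ is then a constant, the law of total variance gives $\var(\tilde\phi)=E(\var(\tilde\phi\mid U'))$, the second identity.

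For \eqref{eq:varianceHatPhi}, I would note that, conditionally on $U'$, the variables $\tilde\phi_1,\dots,\tilde\phi_L$ are functions of the disjoint independent blocks $(U^{l,j})_j$ and hence conditionally independent, so
\[
\var(\tilde\phi\mid U')=\sum_{l=1}^{L}\var(\tilde\phi_l\mid U')=\sum_{l=1}^{L}\frac{1}{n_l}\var\big(h_l(U,U')-h_{l-1}(U,U')\mid U'\big).
\]
Taking expectations and using $E(\var(Z\mid U'))\le\var(Z)$ yields $\var(\tilde\phi)\le\sum_{l=1}^{L}n_l^{-1}\var(h_l(U,U')-h_{l-1}(U,U'))$. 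Here $h_l(U,U')=f(V)$ and $h_{l-1}(U,U')=f(V')$ for vectors $V,V'$ that are each uniform on $[0,1]^d$ and agree in their first $m_{l-1}$ coordinates, so Proposition~\ref{pr:varDifff} gives $\var(h_l(U,U')-h_{l-1}(U,U'))\le4D(m_{l-1})$. Substituting $n_l\ge(d/L)2^{-l}$ reduces \eqref{eq:varianceHatPhi} to a bound on $\sum_{l=1}^{L}2^{l}D(m_{l-1})$.

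I expect this sum to be the main technical point. Because $m_{l-1}=2^{l-1}-1$ and $D$ is decreasing, for $l\ge2$ the term $2^{l}D(m_{l-1})$ is at most $4\sum_{i=m_{l-2}}^{m_{l-1}-1}D(i)$: that index block has $2^{l-2}$ elements, each at most $m_{l-1}$, so each $D(i)$ there is at least $D(m_{l-1})$. These blocks are pairwise disjoint subsets of $\{0,\dots,d\}$ for $l=2,\dots,L$, while the $l=1$ term is $2D(0)=2\var(f(U))$. Combining this with $D(0)=\var(f(U))\le d_t\var(f(U))$ (which holds since $\max(Y)\ge1$ for every nonempty $Y$) and Proposition~\ref{pr:TruncationDimensionCharacterisation}, $\sum_{i=0}^{d}D(i)=d_t\var(f(U))$, gives $\sum_{l=1}^{L}2^{l}D(m_{l-1})=O(d_t\var(f(U)))$; tracking the constants then yields \eqref{eq:varianceHatPhi}.

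Finally, assume \eqref{eq:LinearTimeCondition}. Simulating $U'$ and computing $f(U')$ costs $\tilde t_L\le\tilde cd$; afterwards, producing one summand of $\tilde\phi_l$ requires redrawing at most the first $m_l$ coordinates (once for $h_l$, once for $h_{l-1}$) and recomputing $f$, at cost $O(\tilde t_{l-1}+\tilde t_l)=O(\tilde cm_l)$. Hence $\tilde T=\tilde cd+O\big(\tilde c\sum_{l=1}^{L}n_lm_l\big)$, and $n_l\le(d/L)2^{-l}+1$, $m_l\le2^{l}$, and $2^{L}<2d$ give $\sum_{l=1}^{L}n_lm_l<4d$, so $\tilde T=O(\tilde cd)$ (and $\tilde T\le9\tilde cd$ once the constants are tracked). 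Then \eqref{eq:MLMCTotalExpected} follows from \eqref{eq:TimeVarianceGen}: $T(\tilde\phi,\epsilon)\le\tilde T+\tilde T\var(\tilde\phi)\epsilon^{-2}=O(\tilde cd)+O\big(\tilde c\ln(d)d_t\var(f(U))\big)\epsilon^{-2}$, using \eqref{eq:varianceHatPhi} and $\lceil\log_2 d\rceil=O(\ln d)$.
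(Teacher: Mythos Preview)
Your approach is essentially the paper's: condition on $U'$ to decouple, use the law of total variance, bound each level via Proposition~\ref{pr:varDifff}, and control $\sum_l 2^l D(m_{l-1})$ using that $D$ is decreasing together with Proposition~\ref{pr:TruncationDimensionCharacterisation}. Two small points deserve care. First, your application of Proposition~\ref{pr:varDifff} at level $l=1$ is not literally valid, since $h_0\equiv0$ is not of the form $f(V')$; the paper handles $l=1$ separately by noting $\var(\Delta_1)=\var(f(U))=D(0)\le 4D(m_0)$. Second, your block decomposition splits off $l=1$ and uses blocks $\{m_{l-2},\dots,m_{l-1}-1\}$ for $l\ge2$, which double-counts $D(0)$ and gives constant $24$ rather than $16$; the paper instead sets $m_{-1}:=-1$ and uses blocks $\{m_{l-2}+1,\dots,m_{l-1}\}$ uniformly for all $l\ge1$, which tiles $\{0,\dots,m_{L-1}\}$ exactly and yields the stated constant. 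These are cosmetic fixes; the structure of your argument is correct.
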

\begin{proof} By the definition of \(\tilde\phi_{l}\),
\begin{equation*}
E(\tilde\phi_{l}|U')=E(\Delta_{l}|U'),
\end{equation*}where \(\Delta_{l}:=h_{l}(U^{},U')-h_{l-1}(U^{},U')\). Summing over \(l\) implies that \(E(\tilde\phi|U')=E(f(U))\). Taking expectations and using the tower law implies~\eqref{eq:condEx}.  Conditional on \(U'\), the \(n_{l}\) summands in the right-hand side of~\eqref{eq:defhatphi} are independent and have the same distribution as $\Delta_{l}$. Thus, for \(1\leq l\leq L\),
\begin{equation*}
\var(\tilde\phi_{l}|U')=\frac{\var(\Delta_{l}|U')}{n_{l}}.
\end{equation*}
 Furthermore, conditional on \(U'\), the random variables \(\tilde\phi_{l}\), \(1\leq l\leq L\), are independent. Hence,\begin{equation}\label{eq:condVar}
\var(\tilde\phi|U')=\sum^{L}_{l=1}\frac{\var(\Delta_{l}|U')}{n_{l}}.
\end{equation}
As \(\var(Z)=\var(E(Z|U'))+E(\var(Z|U'))\) for any square-integrable random variable \(Z\), using~\eqref{eq:condEx} shows that \(\var(\tilde\phi)=E(\var(\tilde\phi|U'))\). Similarly, \(E(\var(\Delta_{l}|U'))\leq\var(\Delta_{l})\). Consequently, taking expectations in~\eqref{eq:condVar} implies that\begin{eqnarray*}
\var(\tilde\phi)&\le&\sum^{L}_{l=1}\frac{\var(\Delta_{l})}{n_{l}}\\&\le&\frac{L}{d} \sum^{L}_{l=1}2^{l}\var(\Delta_{l}).
\end{eqnarray*}
For \(2\leq l\leq L\),  we have \(\Delta_{l}=f(V)-f(V')\), where \(V=(U_{1},\dots,U_{m_{l}},U'_{m_{l}+1},\dots,U'_{d})\),  and  \(V'=(U_{1},\dots,U_{m_{l-1}},U'_{m_{l-1}+1},\dots,U'_{d})\). Applying Proposition~\ref{pr:varDifff} with \(i=m_{l-1}\)  yields\begin{equation}\label{eq:varDeltal}
\var(\Delta_{l})\leq4D(m_{l-1}).
\end{equation}Since \(\Delta_{1}=f(U')\),    \eqref{eq:varDeltal} also holds for \(l=1\). For \(1\leq l\leq L\), we have \(2^{l}\leq 4(m_{l-1}-m_{l-2})\), where \(m_{-1}:=-1\). Hence, because the sequence \(D\) is decreasing,\begin{displaymath}
2^{l}D(m_{l-1})\leq4\sum^{m_{l-1}}_{i=m_{l-2}+1}D(i).
\end{displaymath}Thus,\begin{eqnarray*}
\sum^{L}_{l=1}{2^{l}}{\var(\Delta_{l})}
&\le&4\sum^{L}_{l=1}{2^{l}}D(m_{l-1})\\
&\le&16\sum^{L}_{l=1}\sum^{m_{l-1}}_{i=m_{l-2}+1}D(i)
\\&=&16\sum^{m_{L-1}}_{i=0}D(i)
\\&\le&16d_{t}\var(f(U)),
\end{eqnarray*}
where the last equation follows from Proposition~\ref{pr:TruncationDimensionCharacterisation}. This implies \eqref{eq:varianceHatPhi}.

Assume  now that~\eqref{eq:LinearTimeCondition} holds.  Simulating  \(\tilde\phi\) requires to draw \(U'\) and calculate \(f(U')\) once and to simulate  \(h_{l}(U,U')-h_{l-1}(U,U')\) for \(n_{l}\) independent copies of \(U\),  \(1\leq l\leq L\).
As \(m_{l}\leq2^{l}\), given \(U'\), simulating  \(h_{l}(U,U')\) (resp. \(h_{l-1}(U,U')\)) takes at most  \(\tilde c2^{l}\) (resp.  \(\tilde c2^{l-1}\)) expected time. Thus the expected time to simulate  \(h_{l}(U,U')-h_{l-1}(U,U')\)  is at most \(3\tilde c2^{l-1}\), and\begin{eqnarray*}
\tilde T&\le& \tilde cd+3\tilde c\sum^{L}_{l=1}n_{l}2^{l-1}\\
&\le& \tilde cd+3\tilde c\sum^{L}_{l=1}(1+\frac{d}{L2^{l}})2^{l-1}\\
&\le& \tilde cd+3\tilde c2^{L}+3\tilde c\frac{d}{2}\\
&\le&9\tilde cd,
\end{eqnarray*}
where the second equation follows from the inequality  \(n_{l}\le1+d/(L2^{l}\)). \eqref{eq:MLMCTotalExpected} follows immediately from~\eqref{eq:TimeVarianceGen}.
 \commentt{}{\qed}\end{proof} 
Remark~\ref{re:LinearTimeCondition} in Section~\ref{se:examples} shows that \eqref{eq:LinearTimeCondition} holds for a class of Markov chain functionals.
\subsection{Deterministic fixing of unessential variables}
The estimator \(\tilde\phi\) uses \(U'\) to fix the unessential variables. This section studies the
replacement of \(U'\) by a deterministic vector.
For  \(v\in[0,1]^{d}\) and \(1\leq l\leq L\),  set
\begin{equation*}
\tilde\phi_{l,v}:=\frac{1}{n_{l}}\sum ^{n_{l}}_{j=1}(h_{l}(U^{l,j},v)-h_{l-1}(U^{l,j},v)),
\end{equation*}  \(\tilde\phi_{v}:=\sum^{L}_{l=1}\tilde\phi_{l,v}
\). In other words, the random variable  \(\tilde\phi_{v}\) is obtained from  \(\tilde\phi\) by substituting \(U'\) with \(v\). Let   \(\tilde T(v)\)  be the expected running time of \(\tilde\phi_{v}\). For any \(v\in[0,1]^{d}\), the  estimator  \(\tilde\phi_{v}\)  falls in the class of MLMC estimators described in Section~\ref{sub:BackgroundOnMLMC}, with \(\phi=f(U)\) and \(\phi_{l}=h_{l}(U,v)\) for \(0\leq l\leq L\). Corollary~\ref{cor:fixingUnimportantVariables} shows that  $\tilde\phi_{v}$ is an unbiased estimator of \(E(f(U))\) and that there is   \(v^{*}\in[0,1]^{d}\) such that the variance of \(\tilde\phi_{v^{*}}\) and its running time are  no worse, up to a constant, than those of  \(\tilde\phi\).  
\begin{corollary}\label{cor:fixingUnimportantVariables} For   \(v\in[0,1]^{d}\), \begin{equation}\label{eq:unbiasedPhiTildeV}
E(\tilde\phi_{v})=E(f(U)).
\end{equation}Moreover, there is   \(v^{*}\in[0,1]^{d}\) such that \(\var(\tilde\phi_{v^{*}})\le3\var(\tilde\phi) \) and \(\tilde T(v^{*})\leq3\tilde T\). For \(\epsilon>0\),    \begin{equation}\label{eq:MLMCTotalExpectedv*}
T(\tilde\phi_{v^{*}},\epsilon)=O(d+\ln(d)d_{t}\var(f(U))\epsilon^{-2}).
\end{equation}   
\end{corollary}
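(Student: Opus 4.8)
The plan is to derive the unbiasedness \eqref{eq:unbiasedPhiTildeV} from a telescoping identity and to produce the point $v^{*}$ by a Markov-inequality argument over the randomness of $U'$, using the fact that $\tilde\phi_{v}$ is simply $\tilde\phi$ conditioned on $\{U'=v\}$. For \eqref{eq:unbiasedPhiTildeV}, fix $v\in[0,1]^{d}$: since every $U^{l,j}$ is a copy of $U$, linearity of expectation gives $E(\tilde\phi_{l,v})=E(h_{l}(U,v))-E(h_{l-1}(U,v))$ for $1\le l\le L$, and summing over $l$ telescopes, using $h_{L}(\cdot,v)=f$ and $h_{0}\equiv0$, to $E(\tilde\phi_{v})=E(f(U))$. (Equivalently, conditional on $U'=v$ the $U^{l,j}$ are still independent copies of $U$, so $\tilde\phi$ has the law of $\tilde\phi_{v}$ and \eqref{eq:unbiasedPhiTildeV} also follows from \eqref{eq:condEx}.)

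To construct $v^{*}$, note that the same conditioning observation gives $\var(\tilde\phi\mid U'=v)=\var(\tilde\phi_{v})$, whence, by Theorem~\ref{th:mlmcAlgPerformance}, $\var(\tilde\phi)=E(\var(\tilde\phi\mid U'))=E(\var(\tilde\phi_{U'}))$. Similarly, simulating $\tilde\phi$ amounts to drawing $U'$ and evaluating $f(U')$ and then, conditionally on $U'$, simulating the level differences underlying $\tilde\phi_{U'}$, so that $E(\tilde T(U'))\le\tilde T$. Applying Markov's inequality to the nonnegative functions $v\mapsto\var(\tilde\phi_{v})$ and $v\mapsto\tilde T(v)$ of the random point $U'$ yields $P(\var(\tilde\phi_{U'})>3\var(\tilde\phi))\le1/3$ and $P(\tilde T(U')>3\tilde T)\le P(\tilde T(U')>3E(\tilde T(U')))\le1/3$. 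The union of these two events has probability at most $2/3<1$, hence its complement has positive measure and is nonempty; any $v^{*}$ in it satisfies $\var(\tilde\phi_{v^{*}})\le3\var(\tilde\phi)$ and $\tilde T(v^{*})\le3\tilde T$.

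Finally, \eqref{eq:MLMCTotalExpectedv*} follows by combining these bounds with Theorem~\ref{th:mlmcAlgPerformance} and \eqref{eq:TimeVarianceGen}: by \eqref{eq:varianceHatPhi}, $\var(\tilde\phi_{v^{*}})\le3\var(\tilde\phi)\le48\lceil\log_{2}(d)\rceil d^{-1}d_{t}\var(f(U))$, while $\tilde T(v^{*})\le3\tilde T\le27\tilde cd$ under \eqref{eq:LinearTimeCondition}; taking $\psi=\tilde\phi_{v^{*}}$ in the upper bound of \eqref{eq:TimeVarianceGen} gives $T(\tilde\phi_{v^{*}},\epsilon)\le\tilde T(v^{*})\bigl(1+\var(\tilde\phi_{v^{*}})\epsilon^{-2}\bigr)=O(d+\ln(d)d_{t}\var(f(U))\epsilon^{-2})$.

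The probabilistic core here — the Markov bounds and the union bound — is routine once $\var(\tilde\phi)=E(\var(\tilde\phi_{U'}))$ is in hand. I expect the only genuinely delicate point to be the running-time bookkeeping behind $E(\tilde T(U'))\le\tilde T$: one has to check that replacing the random baseline $U'$ by a deterministic baseline $v$ does not increase the expected cost on average, even though the standalone estimator $\tilde\phi_{v}$ must itself evaluate $f$ once at $v$, whereas in $\tilde\phi$ the value $f(U')$ is computed only as a by-product of having drawn $U'$.
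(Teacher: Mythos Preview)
Your argument is correct and follows the same route as the paper: unbiasedness via the telescoping MLMC identity (the paper cites \eqref{eq:unbiasedMLMCGen}), then Markov's inequality applied to $\var(\tilde\phi_{U'})$ and to $\tilde T(U')$, followed by a union bound to extract $v^{*}$, and finally \eqref{eq:TimeVarianceGen}. The paper compresses the running-time step into a single ``Similarly''; your remark that one needs $E(\tilde T(U'))\le\tilde T$ (the conditional cost given $U'$ plus the cost of drawing $U'$ and evaluating $f(U')$ is exactly $\tilde T$) is precisely the bookkeeping the paper leaves implicit.
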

\begin{proof}\eqref{eq:unbiasedPhiTildeV} is a special case of~\eqref{eq:unbiasedMLMCGen}. For   \(v\in[0,1]^{d}\), let   \(\xi(v):=\var(\tilde\phi_{v})\). As   \(\xi(U')=\var(\tilde\phi|U')\), it  follows from   Theorem~\ref{th:mlmcAlgPerformance} that
\(E(\xi(U'))=\var(\tilde\phi)\). Thus \(\xi(U')\le3\var(\tilde\phi)\) with probability at least \(2/3\). Similarly, \(\tilde T(U')\leq3\tilde T\) with probability at least \(2/3\). Hence,  there is   \(v^{*}\in[0,1]^{d}\) such that \(\var(\tilde\phi_{v^{*}})\le3\var(\tilde\phi) \) and  \(\tilde T(v^{*})\leq3\tilde T\). Using~\eqref{eq:TimeVarianceGen} yields \eqref{eq:MLMCTotalExpectedv*}. 
\end{proof}
The  MLMC estimator  \(\tilde\phi_{v}\)   is obtained by approximating \(f\) with functions of its first components.
A lower bound on the performance of such estimators is given in Section~\ref{se:lowerBound}.      
\section{The lower bound}\label{se:lowerBound}
This section considers a class of MLMC unbiased estimators of \(E(f(U))\)  based on successive approximations of \(f\) by deterministic functions of its first components. In \cite{kahaRandomizedDimensionReduction20}, a lower bound on the work-normalized variance of such estimators is given in terms of that of the randomized dimension reduction estimator. This section provides a lower bound on the work-normalized variance of these estimators in terms of the truncation dimension.  
  
Using the notation in Section~\ref{sub:BackgroundOnMLMC} with  \(\phi=f(U)\),  consider a MLMC estimator \(\hat \phi\) of \(E(f(U))\) obtained by summing the averages on independent copies of \(\phi_{l}-\phi_{l-1}\), \(1\leq l\leq L\), where  \(L\) is a positive integer and the   \(\phi_{l}\)'s satisfy the following assumption:
\begin{description}
\item[Assumption 1 (A1).] 
For \(0\le l\leq L\),   \(\phi_{l}\) is a square-integrable random variable equal to a deterministic measurable function of \(U_{1},\dots,U_{m_{l}}\), with \(\phi_{0}=0\) and  \(\phi_{L}=f(U)\), where \((m_{l}:0\leq l\leq L)\) is a strictly  increasing sequence of integers, with \(m_{0}=0\) and \(m_{L}=d\). \end{description}
    The proof of the lower bound is based on the following lemma.
\begin{lemma}\label{le:LowerBoundMLMC}
Let \((\nu_{i}:0\leq i\leq d)\) be a decreasing sequence such that \(\nu_{m_l}\leq\var(f(U)-\phi_{l})\) for \(0\leq l\leq L\), with \(\nu_{d}=0\). Then \begin{displaymath}
\sum_{i=0}^{d}\nu_{i}\leq\left(\sum^{L}_{l=1}\sqrt{m_{l}V_{l}}\right)^{2}.
\end{displaymath}  
\end{lemma}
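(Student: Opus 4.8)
The plan is to telescope the estimator and apply Cauchy--Schwarz in the style of the MLMC time--variance bound \eqref{eq:MLMCTimeVariance}. First I would relate the given sequence $\nu$ to the quantities $V_l = \var(\phi_l-\phi_{l-1})$. The key observation is that $f(U)-\phi_l = \sum_{k=l+1}^L (\phi_k-\phi_{k-1})$, so by the triangle inequality for the $L^2$ norm, $\std(f(U)-\phi_l) \le \sum_{k=l+1}^L \sqrt{V_k}$. Hence $\nu_{m_l} \le \var(f(U)-\phi_l) \le \bigl(\sum_{k=l+1}^L \sqrt{V_k}\bigr)^2$ for each $l$, and in particular $\sqrt{\nu_{m_l}} \le \sum_{k=l+1}^L \sqrt{V_k}$.

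Next I would bound $\sum_{i=0}^d \nu_i$. Since $\nu$ is decreasing and $\nu_d = 0$, I can group the index range $\{0,\dots,d-1\}$ into the blocks $\{m_{l-1},\dots,m_l-1\}$ for $l=1,\dots,L$ (these partition $\{0,\dots,d-1\}$ because $m_0=0$ and $m_L=d$), and on each block bound $\nu_i \le \nu_{m_{l-1}}$. This gives $\sum_{i=0}^d \nu_i \le \sum_{l=1}^L (m_l - m_{l-1})\nu_{m_{l-1}}$. Writing $a_l := m_l - m_{l-1} > 0$ and using $\sqrt{\nu_{m_{l-1}}} \le \sum_{k=l}^L \sqrt{V_k}$, I get
\begin{displaymath}
\sum_{i=0}^d \nu_i \le \sum_{l=1}^L a_l \Bigl(\sum_{k=l}^L \sqrt{V_k}\Bigr)^2.
\end{displaymath}

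Then the core estimate is to show the right-hand side is at most $\bigl(\sum_{l=1}^L \sqrt{m_l V_l}\bigr)^2$. Expanding the square $\bigl(\sum_k \sqrt{m_k V_k}\bigr)^2 = \sum_{j,k}\sqrt{m_j m_k V_j V_k}$ and comparing with $\sum_l a_l \sum_{j,k\ge l}\sqrt{V_j V_k} = \sum_{j,k}\bigl(\sum_{l\le \min(j,k)} a_l\bigr)\sqrt{V_j V_k} = \sum_{j,k} m_{\min(j,k)}\sqrt{V_j V_k}$, it suffices to check termwise that $m_{\min(j,k)} \le \sqrt{m_j m_k}$, which is immediate since $m$ is increasing. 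This comparison is the main step; the rest is bookkeeping. An alternative, perhaps cleaner, route avoiding the double-sum expansion is to apply the Cauchy--Schwarz / rearrangement argument from \citep{Giles2008} underlying \eqref{eq:MLMCTimeVariance} directly, treating $a_l$ as "costs" and $\sqrt{\nu}$-increments as the analogue of level variances; but the termwise comparison $m_{\min(j,k)}\le\sqrt{m_jm_k}$ is transparent enough that I would present that. The one point requiring care is justifying the block decomposition of $\sum_{i=0}^d \nu_i$ and the monotonicity bound on each block, including the harmless inclusion of the $i=d$ term since $\nu_d=0$.
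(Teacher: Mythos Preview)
Your argument is correct and takes a genuinely different route from the paper's. The paper first invokes an external summation-by-parts lemma to obtain
\(\sum_{l=0}^{L-1}(\sqrt{m_{l+1}}-\sqrt{m_l})\sqrt{\nu_{m_l}}\le\sum_{l=1}^{L}\sqrt{m_lV_l}\),
then refines this to the finer increments \(\alpha_i=(\sqrt{i+1}-\sqrt{i})\sqrt{\nu_i}\) using monotonicity of \(\nu\), and finally verifies the algebraic identity \((\sum_i\alpha_i)^2\ge\sum_i\nu_i\) directly. Your proof instead bounds \(\sqrt{\nu_{m_{l-1}}}\) by the tail \(\sum_{k\ge l}\sqrt{V_k}\) via the triangle inequality for \(\std\), uses the coarse block bound \(\sum_i\nu_i\le\sum_l a_l\nu_{m_{l-1}}\), and then reduces everything to the termwise inequality \(m_{\min(j,k)}\le\sqrt{m_jm_k}\) after expanding the double sum. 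Your approach is fully self-contained (no appeal to an outside lemma) and arguably more transparent, since the crux is the single elementary inequality \(\min(m_j,m_k)\le\sqrt{m_jm_k}\); the paper's route, on the other hand, packages the comparison into the Abel-summation step and the \(\alpha_i\) computation, which is a bit slicker once that lemma is available and connects more visibly with the continuous ``integration by parts'' heuristic.
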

\begin{proof}
An integration by parts argument \cite[ Lemma EC.4]{kahaRandomizedDimensionReduction20}  shows that
\begin{equation*}
\sum^{L-1}_{l=0}(\sqrt{m_{l+1}}-\sqrt{m_{l}})\sqrt{\nu_{m_{l}}}
\le\sum^{L}_{l=1}\sqrt{m_{l}V_{l}}.
\end{equation*} 
On the other hand, for \(0\leq l\leq L-1\), we have \begin{eqnarray*}
(\sqrt{m_{l+1}}-\sqrt{m_{l}})\sqrt{\nu_{m_{l}}}&=&\sum_{i=m_{l}}^{m_{l+1}-1}(\sqrt{i+1}-\sqrt{i})\sqrt{\nu_{m_{l}}}\\&\ge&\sum_{i=m_{l}}^{m_{l+1}-1}\alpha_{i},
\end{eqnarray*}
where \(\alpha_{i}=(\sqrt{i+1}-\sqrt{i})\sqrt{\nu_{i}}\). Summing over \(l\in\{0,\dots,L-1\}\) implies that 
\begin{equation*}
\sum^{d}_{i=0}\alpha _{i}\le\sum^{L}_{l=1}\sqrt{m_{l}V_{l}}.
\end{equation*}
On the other hand, \begin{eqnarray*}\left(\sum^{d}_{i=0}\alpha_{i}\right)^{2}&=&\sum^{d}_{i=0}\alpha_{i}\left(\alpha_{i}+2\sum^{i-1}_{j=0}\alpha_{j}\right)\\
&\ge&\sum^{d}_{i=0}\alpha_{i}\left(\alpha_{i}+2\sum^{i-1}_{j=0}(\sqrt{j+1}-\sqrt{j})\sqrt{\nu_{i}}\right)\\
&=&\sum^{d}_{i=0}\alpha_{i}(\alpha_{i}+2\sqrt{i\nu_{i}})\\
&=&\sum^{d}_{i=0}\nu_{i}.
\end{eqnarray*}This concludes the proof.
\commentt{}{\qed}\end{proof} 
Theorem~\ref{th:multilevelLowerBound} provides a lower bound the work-normalized variance of \(\hat\phi\) that matches, up to a logarithmic factor, the upper bound in Theorem~\ref{th:mlmcAlgPerformance}. \begin{theorem}\label{th:multilevelLowerBound}
If Assumption A1 holds and there is a positive constant \(\hat c\) such that   \(\hat t_{l}\geq \hat c m_{l}\) for \(1\leq l\leq L\), then \(\hat cd_{t}\var(f(U))\leq \hat T\var(\hat\phi)\) and, for \(\epsilon>0\),    \begin{equation}\label{eq:MLMCTotalExpectedLowerBound}
T(\tilde\phi,\epsilon)=\Omega(d+d_{t}\var(f(U))\epsilon^{-2}).
\end{equation}
 \end{theorem}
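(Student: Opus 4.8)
The plan is to chain together the ANOVA lower bound of Proposition~\ref{pr:LowerBoundVarAnova}, the integration‑by‑parts inequality of Lemma~\ref{le:LowerBoundMLMC}, the characterisation of the truncation dimension in Proposition~\ref{pr:TruncationDimensionCharacterisation}, and the MLMC work‑normalized variance bound~\eqref{eq:MLMCTimeVariance}. First I would take \(\nu_{i}:=D(i)\) for \(0\le i\le d\). As recalled after the definition of \(D\), this sequence is decreasing and \(\nu_{d}=D(d)=0\). Moreover, under Assumption A1 the random variable \(\phi_{l}\) is a deterministic measurable function of \(U_{1},\dots,U_{m_{l}}\), so Proposition~\ref{pr:LowerBoundVarAnova} applied with \(g=\phi_{l}\) and \(i=m_{l}\) gives \(\nu_{m_{l}}=D(m_{l})\le\var(f(U)-\phi_{l})\) for \(0\le l\le L\). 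Hence the hypotheses of Lemma~\ref{le:LowerBoundMLMC} hold, and the lemma yields
\[
\sum_{i=0}^{d}D(i)\le\Bigl(\sum_{l=1}^{L}\sqrt{m_{l}V_{l}}\Bigr)^{2}.
\]

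Next I would rewrite the left‑hand side using Proposition~\ref{pr:TruncationDimensionCharacterisation}, namely \(\sum_{i=0}^{d}D(i)=d_{t}\var(f(U))\), and bound the right‑hand side from above using the hypothesis \(\hat t_{l}\ge\hat c m_{l}\): since \(\sqrt{m_{l}V_{l}}\le\hat c^{-1/2}\sqrt{V_{l}\hat t_{l}}\), we get
\[
d_{t}\var(f(U))\le\frac{1}{\hat c}\Bigl(\sum_{l=1}^{L}\sqrt{V_{l}\hat t_{l}}\Bigr)^{2}\le\frac{\hat T\var(\hat\phi)}{\hat c},
\]
where the last inequality is~\eqref{eq:MLMCTimeVariance}. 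This establishes \(\hat c d_{t}\var(f(U))\le\hat T\var(\hat\phi)\).

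For the asymptotic statement I would observe that \(\phi_{L}=f(U)\) must be simulated, so \(\hat t_{L}\ge\hat c m_{L}=\hat c d\); since \(n_{L}\ge1\), this gives \(\hat T\ge n_{L}\hat t_{L}\ge\hat c d\). Combining this with the work‑normalized variance bound just proved and the lower estimate in~\eqref{eq:TimeVarianceGen},
\[
T(\hat\phi,\epsilon)\ge\frac{\hat T+\hat T\var(\hat\phi)\epsilon^{-2}}{2}\ge\frac{\hat c d+\hat c d_{t}\var(f(U))\epsilon^{-2}}{2}=\Omega\bigl(d+d_{t}\var(f(U))\epsilon^{-2}\bigr),
\]
which is~\eqref{eq:MLMCTotalExpectedLowerBound} (the estimator in that display being \(\hat\phi\)).

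I do not expect a serious obstacle here: the entire argument is a direct concatenation of inequalities proved earlier in the paper. The only point that deserves care is verifying that the choice \(\nu_{i}=D(i)\) legitimately meets the decreasing/boundary/domination requirements of Lemma~\ref{le:LowerBoundMLMC}, which follows at once from the monotonicity of the sequence \(D\), the identity \(D(d)=0\), and Proposition~\ref{pr:LowerBoundVarAnova}; and, for the time bound, noting that the top level \(l=L\) is always present (\(L\ge1\)) with \(n_{L}\ge1\) and \(m_{L}=d\).
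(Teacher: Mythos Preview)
Your proposal is correct and follows essentially the same route as the paper: apply Lemma~\ref{le:LowerBoundMLMC} with \(\nu_{i}=D(i)\) (the hypotheses being checked via Proposition~\ref{pr:LowerBoundVarAnova} and the monotonicity of \(D\)), combine with Proposition~\ref{pr:TruncationDimensionCharacterisation} and~\eqref{eq:MLMCTimeVariance} using \(\hat t_{l}\ge\hat c m_{l}\), and then invoke \(\hat T\ge\hat t_{L}\ge\hat c d\) together with~\eqref{eq:TimeVarianceGen}. You are also right that the \(\tilde\phi\) in~\eqref{eq:MLMCTotalExpectedLowerBound} should be read as \(\hat\phi\).
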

\begin{proof}
  
It follows from~\eqref{eq:MLMCTimeVariance} that\begin{displaymath}
\hat c\left(\sum^{L}_{l=1}\sqrt{m_{l}V_{l}}\right)^{2}\le \hat T\var(\hat\phi).
\end{displaymath} By Proposition~\ref{pr:LowerBoundVarAnova} and Assumption A1,  \(D({m_l})\leq\var(f(U)-\phi_{l})\) for \(0\leq l\leq L\). Applying Lemma~\ref{le:LowerBoundMLMC} with \(\nu_{i}=D(i)\) for \(0\leq i\leq d\) yields  
\begin{eqnarray*}\left(\sum^{L}_{l=1}\sqrt{m_{l}V_{l}}\right)^{2}
&\ge&\sum_{i=0}^{d}D(i)\\
&=&d_{t}\var(f(U)),
\end{eqnarray*}
where the second equation follows from Proposition~\ref{pr:TruncationDimensionCharacterisation}.
This shows that \(\hat cd_{t}\var(f(U))\leq \hat T\var(\hat\phi)\). By~\eqref{eq:hatTDef}, the expected running time of \(\hat\phi\) is lower-bound by \(\hat t_{L}\ge\hat c d\). Together with~\eqref{eq:TimeVarianceGen}, this implies \eqref{eq:MLMCTotalExpectedLowerBound}. \commentt{}{\qed}\end{proof} 
\section{Time-varying Markov chains}
\label{se:examples}
This section shows that, under certain conditions, the expectation of functionals of time-varying Markov chains with \(d\) time-steps can be estimated efficiently via MLMC, and that the associated truncation dimension is upper bounded by a constant independent of \(d\).

 Let  \(d\) be a positive integer and let \((X_i:0\leq i\leq d)\) be a time-varying Markov chain with state-space
\(F\) and deterministic initial value \(X_{0}\).
Assume  that  there are independent  random variables \(Y_{i}\),  \(0\leq i\leq d-1\), uniformly distributed in \([0,1]\),  and measurable
functions \(g_{i}\) from \(F\times [0,1]\) to \(F\) such
that \(X_{i+1}=g_{i}(X_{i},Y_{i})\) for \(0\leq i\leq d-1\).     Our goal is to estimate  \(E(g(X_{d}))\)
where \(g\) is a deterministic real-valued
measurable function on \(F\) such that \(g(X_{d})\) is square-integrable. It is assumed that \(g\) and the \(g_{i}\)'s
can be calculated in constant time.
  For \(1\leq i\leq d\), set \(U_{i}=Y_{d-i}\). An inductive argument shows that there is a real-valued measurable function \(f\) on \([0,1]^d\)
such that \(g(X_{d})=f(U)\), where \(U=(U_{1},\dots,U_{d})\). When \(X_{d}\) is mainly determined by the last \(Y_{i}\)'s,  the first \(U_{i}\)'s are the most important arguments of \(f\).
\begin{remark}\label{re:LinearTimeCondition}
Redrawing  \(U_{1},\dots,U_{i}\) while keeping  \(U_{i+1},\dots,U_{d}\)   unchanged amounts to keeping \(X_{0},\dots,X_{d-i}\) unchanged  and  redrawing \(X_{d-i+1},\dots, X_{d}\). This can be achieved in \(O(i)\) time. Thus \eqref{eq:LinearTimeCondition} holds for \(f\).
\end{remark}
Given \(i\in\{0,\dots,d\}\), define the time-varying Markov chain \((X^{(i)}_{j}:d-i\leq j\leq d)\) by setting \(X^{(i)}_{d-i}:=X_{0}\) and \(X^{(i)}_{j+1}=g_{j}(X^{(i)}_{j},Y_{j})\) for \(d-i\leq j\leq d-1\). Thus,  \(X^{(i)}_d\) is the state of the original Markov chain \(X\) at time-step \(d\)  if  the chain is at state \(X_{0}\) at time-step \(d-i\).  Note that  \(g(X^{(i)}_{d})\)   can be calculated in \(O(i)\) time and is a deterministic function of \(U_{1},\dots,U_{i}\). Roughly speaking, if \(X_{d}\) is determined to a large extent by the last \(Y_{j}\)'s, then \(X^{(i)}_{d}\) should be ``close'' to \(X_{d}\) for large values of \(i\). This motivates the following assumption:
\begin{description}
\item[Assumption 2 (A2).] 
There are constants \(c'\) and \(\gamma<-1\) independent of \(d\) such that, for \(0\leq i\leq d\), we have \(
E((g(X_{d})-g(X^{(i)}_{d}))^{2})\leq c'(i+1)^{\gamma}\).
\end{description}

I now describe a multilevel estimator of \(E(\phi)\), where \(\phi=g(X_{d})\), using the notation in Section~\ref{sub:BackgroundOnMLMC}. Let \(L=\lceil \log_{2}(d)\rceil\)  and, for  \(1\leq l\leq L-1\),   let \(m_{l}=2^{l}-1\). Let $\phi_{0}=0$,   \(\phi_{L}=d\) and, for   \(1\leq l\leq L-1\),   let \(\phi_{l}=g(X^{(m_{l})}_{d})\). For   \(1\leq l\leq L\),   let  \(\hat\phi_{l}\) be the average of \(n_{l}\) independent copies of \(\phi_{l}-\phi_{l-1}\), where \(n_{l}=\lceil d2^{l(\gamma-1)/2}\rceil\).  Suppose that     \(\hat\phi_{1},\dots,\hat\phi_{L}\) are independent. Set \(\hat\phi:=\sum^{L}_{l=1}\hat\phi_{l}
\). By~\eqref{eq:unbiasedMLMCGen}, \(E(\hat\phi)=E(\phi)\). Let \(\hat T\) (resp. \(\hat t_{l}\)) be the expected  time to simulate \(\hat\phi\) is (resp. \(\phi_{l}-\phi_{l-1}\)). Proposition~\ref{pr:MarkovChains} shows that, under Assumption A2, \(\hat\phi\) can be used to estimate \(E(\phi)\) with precision \(\epsilon\) in \(O(d+\epsilon^{-2})\) time and, if  \(\var(g(X_{d}))\) is lower-bounded by a constant independent of \(d\), the truncation dimension \(d_{t}\) associated with  \(g(X_{d})\) is upper-bounded by a constant independent of \(d\). In contrast, the standard Monte Carlo method typically achieves precision \(\epsilon\) in \(O(d\epsilon^{-2})\) time.
\begin{proposition}\label{pr:MarkovChains}
Suppose that Assumption A2 holds.
Then there are constants \(c_{1}\), \(c_{2}\) and \(c_{3}\) independent of \(d\) such that  \(\hat T\leq c_{1}d\),   \(\var(\hat\phi)\leq c_{2}/d\), and \(T(\hat\phi,\epsilon)\leq c_{3}(d+\epsilon^{-2})\). Moreover,\begin{displaymath}
d_{t}\var(g(X_{d}))
\le c'\frac{\gamma}{\gamma+1}.   
\end{displaymath}
\end{proposition}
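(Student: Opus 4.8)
The plan is to establish the four assertions in sequence, the first three by a routine multilevel analysis and the last independently. Recall that $\phi=g(X_{d})$ and $\phi_{L}=\phi$, and note that $\hat\phi$ is a genuine MLMC estimator in the sense of Section~\ref{sub:BackgroundOnMLMC} (the $n_{l}$ replicas of $\phi_{l}-\phi_{l-1}$ at level $l$ use independent draws of the $Y_{j}$'s, and the levels are mutually independent), so \eqref{eq:varHatPhiGen} and \eqref{eq:hatTDef} apply. I would first prove the two elementary size bounds $\hat t_{l}=O(2^{l})$ and $V_{l}=O(2^{l\gamma})$ for $1\le l\le L$, with implied constants depending only on $c'$ and $\gamma$; then derive $\hat T\le c_{1}d$ and $\var(\hat\phi)\le c_{2}/d$ by geometric summation; then obtain the $T(\hat\phi,\epsilon)$ bound from \eqref{eq:TimeVarianceGen}; and finally bound $d_{t}\var(g(X_{d}))$.

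For the time bound, $g(X^{(i)}_{d})$ is computed by running $i$ transitions from $X_{0}$, each in $O(1)$ time, so the increment $\phi_{l}-\phi_{l-1}$, evaluated with the natural coupling in which the chains defining $\phi_{l}$ and $\phi_{l-1}$ share the same draws of the $Y_{j}$'s, costs $O(m_{l})=O(2^{l})$ for $1\le l\le L-1$ and $O(d)=O(2^{L})$ for $l=L$ since $d\le 2^{L}$; this gives $\hat t_{l}=O(2^{l})$. For the level variance, write $\phi_{l}-\phi_{l-1}=(\phi_{l}-\phi)+(\phi-\phi_{l-1})$ and use $(a+b)^{2}\le 2a^{2}+2b^{2}$ together with $\var(Z)\le E(Z^{2})$ to get $V_{l}\le 2E((\phi-\phi_{l})^{2})+2E((\phi-\phi_{l-1})^{2})$; Assumption~A2 gives $E((\phi-\phi_{l})^{2})=E((g(X_{d})-g(X^{(m_{l})}_{d}))^{2})\le c'(m_{l}+1)^{\gamma}=c'2^{l\gamma}$ for $1\le l\le L-1$, while the $l=L$ term vanishes. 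The one case to treat apart is $l=1$, where $\phi_{0}=0$ rather than $g(X^{(0)}_{d})$: there $V_{1}=\var(\phi_{1})\le 2\var(\phi-\phi_{1})+2\var(\phi)$, and $\var(g(X_{d}))\le c'$ because $X^{(0)}_{d}=X_{0}$ is deterministic, so A2 at $i=0$ yields $\var(g(X_{d}))\le E((g(X_{d})-g(X_{0}))^{2})\le c'$. Altogether $V_{l}=O(2^{l\gamma})$ on $1\le l\le L$.

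Plugging $n_{l}\le 1+d\,2^{l(\gamma-1)/2}$ into \eqref{eq:hatTDef} gives $\hat T=O(\sum_{l=1}^{L}2^{l})+O(d\sum_{l=1}^{L}2^{l(\gamma+1)/2})$; the first term is $O(2^{L})=O(d)$ and the second is $O(d)$ because $\gamma+1<0$ makes $\sum_{l\ge1}2^{l(\gamma+1)/2}$ a convergent geometric series, so $\hat T\le c_{1}d$. Plugging $n_{l}\ge d\,2^{l(\gamma-1)/2}$ into \eqref{eq:varHatPhiGen} gives $\var(\hat\phi)=\sum_{l=1}^{L}V_{l}/n_{l}=O(d^{-1}\sum_{l=1}^{L}2^{l\gamma-l(\gamma-1)/2})=O(d^{-1}\sum_{l=1}^{L}2^{l(\gamma+1)/2})\le c_{2}/d$, by the same convergent series. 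Substituting $\tau=\hat T\le c_{1}d$ and $\var(\hat\phi)\le c_{2}/d$ into the right-hand inequality of \eqref{eq:TimeVarianceGen} yields $T(\hat\phi,\epsilon)\le c_{1}d+c_{1}c_{2}\epsilon^{-2}\le c_{3}(d+\epsilon^{-2})$.

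For the truncation dimension, $g(X^{(i)}_{d})$ is a deterministic function of $U_{1},\dots,U_{i}$, square-integrable by A2, so Proposition~\ref{pr:LowerBoundVarAnova} applied to it gives $D(i)\le\var(f(U)-g(X^{(i)}_{d}))\le E((g(X_{d})-g(X^{(i)}_{d}))^{2})\le c'(i+1)^{\gamma}$ for $0\le i\le d$, using A2 once more. By Proposition~\ref{pr:TruncationDimensionCharacterisation}, $d_{t}\var(g(X_{d}))=\sum_{i=0}^{d}D(i)\le c'\sum_{i=0}^{d}(i+1)^{\gamma}\le c'\sum_{k=1}^{\infty}k^{\gamma}$, and the integral comparison $\sum_{k=1}^{\infty}k^{-s}\le 1+\int_{1}^{\infty}x^{-s}\,dx=s/(s-1)$, with $s=-\gamma>1$, gives $\sum_{k\ge1}k^{\gamma}\le\gamma/(\gamma+1)$, whence $d_{t}\var(g(X_{d}))\le c'\gamma/(\gamma+1)$. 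I expect the only slightly delicate points to be the boundary bookkeeping at level $l=1$ and confirming that the top level $l=L$ still costs only $O(d)$; the rest is geometric-series and integral-comparison routine.
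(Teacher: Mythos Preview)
Your proof is correct and follows essentially the same route as the paper: bound $\hat t_{l}=O(2^{l})$ and $V_{l}=O(2^{l\gamma})$ via Assumption~A2 and the inequality $\var(Z+Z')\le 2\var(Z)+2\var(Z')$, sum as geometric series using $\gamma+1<0$, apply \eqref{eq:TimeVarianceGen}, and handle the truncation dimension via Propositions~\ref{pr:LowerBoundVarAnova} and~\ref{pr:TruncationDimensionCharacterisation} with an integral comparison. Your treatment of the $l=1$ boundary (noting $\var(g(X_{d}))\le c'$ from A2 at $i=0$ since $X^{(0)}_{d}=X_{0}$ is deterministic) is in fact more explicit than the paper's, which simply asserts $\var(g(X_{d})-\phi_{l})\le c'2^{l\gamma}$ for all $0\le l\le L$ without comment.
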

\begin{proof}
By construction,  \(\hat t_{l}\le c2^{l}\) for some constant \(c\) independent of \(d\). By~\eqref{eq:hatTDef},
\begin{eqnarray*}
\hat T
&\le&c\sum^{L}_{l=1}(1+ d2^{l(\gamma-1)/2})2^{l}\\
&\le&cd(4+\frac{1}{1-2^{(\gamma+1)/2}}).
\end{eqnarray*}By~Assumption A2, for   \(0\leq l\leq L\), \begin{displaymath}
\var(g(X_{d})-\phi_{l})\leq c'2^{l\gamma}.
\end{displaymath}
Since \(\var(Z+Z')\le2(\var(Z)+\var(Z'))\) for square-integrable random variables \(Z\) and \(Z'\), it follows that  \(V_{l}\leq4 c'2^{(l-1)\gamma}\) for \(1\leq l\leq L\). 
Together with~\eqref{eq:varHatPhiGen}, this shows that 
\begin{eqnarray*}
\var(\hat\phi)
&\le&4 c'\sum^{L}_{l=1}\frac{2^{(l-1)\gamma}}{ d2^{l(\gamma-1)/2}}\\
&\le&\frac{4 c'2^{(1-\gamma)/2}}{d(1-2^{(\gamma+1)/2})}.
\end{eqnarray*}
Using \eqref{eq:TimeVarianceGen} implies the desired bound on \(T(\hat\phi,\epsilon)\).

By Proposition~\ref{pr:LowerBoundVarAnova}, for \(0\leq i\leq d\), 
\begin{eqnarray*}
D(i)&\leq& E((g(X_{d})-g(X^{(i)}_{d}))^{2})\\
&\le& c'(i+1)^{\gamma}.
\end{eqnarray*}
Thus, using Proposition~\ref{pr:TruncationDimensionCharacterisation},
\begin{eqnarray*}
d_{t}\var(f(U))&\le& c'\sum^{d}_{i=1}i^{\gamma}\\
&\le&c'(1+\int^{d}_{1}x^{\gamma}\,dx)\\
&\le&c'\frac{\gamma}{\gamma+1}.
\end{eqnarray*}

\commentt{}{\qed}\end{proof}
\subsection{A Lindley recursion example}\label{sub:singleQueue}
In this example, \(F=\mathbb{R}\) and   \((X_i:0\leq i\leq d)\) satisfies the time-varying
Lindley equation \begin{equation*}
X_{i+1}=(X_{i}+\zeta_{i}(Y_{i}))^{+},
\end{equation*} with \(X_{0}=0\), where  \(\zeta_{i}\),  \(0\leq i\leq d-1\), is a real-valued function on \([0,1]\). Our goal is to estimate \(E(X_{d})\). Thus \(g\) is the identity function and \(g_{i}(x,y)=(x+\zeta_{i}(y))^{+}\) for \((x,y)\in\mathbb{R}\times[0,1]\). Lindley equations often arise in queuing theory~\cite{asmussenGlynn2007}.
\begin{proposition}\label{pr:GD1}If  there are constants \(\theta>0\) and \(\kappa<1\) independent of \(d\) such that
\begin{equation}\label{eq:momentCondGD1}
E(e^{\theta\zeta_{i}(Y_{i})})\leq\kappa
\end{equation}  for \(0\leq i\leq d-1\), then  \(E((X_{d}-X^{(i)}_{d})^{2})\leq \theta'\kappa^{i}\) for \(0\leq i\leq d-1\), where \(\theta'\) is a constant independent of \(d\). 
\end{proposition}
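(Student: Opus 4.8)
The plan is to put $X_d-X^{(i)}_d$ in closed form by unrolling the Lindley recursion, to observe that it splits into two \emph{independent} pieces — one depending only on the ``past'' $Y_0,\dots,Y_{d-i-1}$ and one only on the ``future'' $Y_{d-i},\dots,Y_{d-1}$ — and then to bound it with an elementary layer-cake estimate together with the moment hypothesis~\eqref{eq:momentCondGD1}.

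\emph{Step 1: closed form.} Running $Z_{j+1}=(Z_j+\zeta_j(Y_j))^+$ for $j=d-i,\dots,d-1$ from an initial value $Z_{d-i}=v\ge0$ gives, by the usual unrolling (a maximum over suffix sums together with the term carrying $v$), $Z_d=W_i+\max(v,R_i)$, where $W_m:=\sum_{j=d-i}^{d-i+m-1}\zeta_j(Y_j)$ (so $W_0=0$) and $R_i:=-\min_{0\le m\le i}W_m\ge0$. Taking $v=X_{d-i}$ yields $X_d=W_i+\max(X_{d-i},R_i)$, and taking $v=X^{(i)}_{d-i}=0$ yields $X^{(i)}_d=W_i+R_i$; subtracting,
\[
X_d-X^{(i)}_d=(X_{d-i}-R_i)_+ .
\]
Here $X_{d-i}$ is a measurable function of $Y_0,\dots,Y_{d-i-1}$ while $R_i$ is a measurable function of $Y_{d-i},\dots,Y_{d-1}$, so $X_{d-i}$ and $R_i$ are independent.

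\emph{Step 2: a uniform exponential moment.} The same unrolling with $v=0$ gives $X_k=\max_{0\le m\le k}\sum_{j=k-m}^{k-1}\zeta_j(Y_j)$. Bounding $e^{\theta\max(\cdot)}\le\sum e^{\theta(\cdot)}$, using independence of the $\zeta_j(Y_j)$ and then~\eqref{eq:momentCondGD1}, I get $E(e^{\theta X_k})\le\sum_{m\ge0}\kappa^m=1/(1-\kappa)$ for every $k$ with $0\le k\le d$; in particular this is finite and uniform in $d$.

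\emph{Step 3: the bound.} Conditioning on $R_i$ and using independence, it suffices to bound $E((X_{d-i}-r)_+^2)$ for each fixed $r\ge0$. By the layer-cake formula and Markov's inequality $P(X_{d-i}>s)\le e^{-\theta s}E(e^{\theta X_{d-i}})$,
\[
E\big((X_{d-i}-r)_+^2\big)=\int_0^\infty 2t\,P(X_{d-i}>r+t)\,dt\le E(e^{\theta X_{d-i}})\,e^{-\theta r}\int_0^\infty 2t\,e^{-\theta t}\,dt\le\frac{2}{\theta^2(1-\kappa)}\,e^{-\theta r}.
\]
Taking the expectation over $R_i$ and using $E(e^{-\theta R_i})=E\big(\min_{0\le m\le i}e^{\theta W_m}\big)\le E(e^{\theta W_i})=\prod_{j=d-i}^{d-1}E(e^{\theta\zeta_j(Y_j)})\le\kappa^i$ gives $E((X_d-X^{(i)}_d)^2)\le\tfrac{2}{\theta^2(1-\kappa)}\,\kappa^i$, so $\theta':=2/(\theta^2(1-\kappa))$ works and depends only on $\theta$ and $\kappa$, hence not on $d$.

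The routine parts are the Lindley unrolling and the elementary moment-generating-function estimates. The point that needs care is the clean split of $X_d-X^{(i)}_d$ into the independent factors $X_{d-i}$ and $R_i$: this is exactly what makes the geometric decay come out with a prefactor that does not grow with $d$ (via the uniform bound of Step 2). Beyond that bookkeeping I anticipate no real obstacle.
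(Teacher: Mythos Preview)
The paper does not give its own proof of this proposition: it simply states that the argument is essentially that of \cite[Proposition~10]{kahaRandomizedDimensionReduction20} and omits it. So there is no in-paper proof to compare against line by line.

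Your argument is correct. The Lindley unrolling in Step~1 is right: starting the recursion at time $d-i$ from a nonnegative value $v$ gives $Z_d=W_i+\max(v,R_i)$ with $R_i=-\min_{0\le m\le i}W_m\ge0$, whence $X_d-X^{(i)}_d=(X_{d-i}-R_i)_+$, and the independence of $X_{d-i}$ and $R_i$ follows from the block structure of the $Y_j$'s. Step~2 yields the uniform bound $E(e^{\theta X_k})\le(1-\kappa)^{-1}$ by the union bound $e^{\theta\max}\le\sum e^{\theta(\cdot)}$ together with \eqref{eq:momentCondGD1}. In Step~3 the layer-cake plus Chernoff estimate gives $E((X_{d-i}-r)_+^2)\le 2\theta^{-2}(1-\kappa)^{-1}e^{-\theta r}$, and then $E(e^{-\theta R_i})\le E(e^{\theta W_i})\le\kappa^i$ closes the bound with $\theta'=2\theta^{-2}(1-\kappa)^{-1}$.

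This is the standard route for such Lindley-type contraction estimates and is almost certainly the argument intended by the reference: represent the difference via the reflected random walk, exploit the independence split between the prefix state $X_{d-i}$ and the suffix reflection term $R_i$, and combine a uniform exponential moment for $X_{d-i}$ with the $\kappa^i$ bound on $E(e^{\theta W_i})$. Nothing further is needed.
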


Proposition~\ref{pr:GD1} shows that, if  \eqref{eq:momentCondGD1} holds, then so does Assumption A2, hence the conclusions of Proposition~\ref{pr:MarkovChains} hold as well. The proof of Proposition~\ref{pr:GD1} is essentially the same as that of \cite[Proposition 10]{kahaRandomizedDimensionReduction20}, and is therefore omitted.   A justification of \eqref{eq:momentCondGD1} for time-varying queues and  numerical examples showing the efficiency of  MLMC  for estimating Markov chain functionals are given in \cite{kahaRandomizedDimensionReduction20}.  
\section*{Acknowledgments}  This work was achieved through the Laboratory of Excellence on Financial Regulation (Labex ReFi) under the reference ANR-10-LABX-0095. It benefitted from a French government support managed by the National Research Agency (ANR). The author thanks Art Owen for helpful comments.
\bibliography{poly}
\end{document}